\newcommand{\cmark}{\textcolor{green}{\ding{51}}}%
\newcommand{\xmark}{\textcolor{red}{\ding{55}}}%
\let\MYcaption\@makecaption
\let\@makecaption\MYcaption
\newtheorem{definition}{Definition}
\newtheorem{lemma}{Lemma}
\newtheorem{theorem}{Theorem}
\newtheorem{remark}{Remark}
\DeclareMathOperator{\until}{\mathbf{U}}
\DeclareMathOperator{\release}{\mathbf{R}}
\DeclareMathOperator{\always}{\mathbf{G}}
\DeclareMathOperator{\eventually}{\mathbf{F}}
\begin{document}

\title{A Smooth Robustness Measure of Signal Temporal Logic for Symbolic Control
}

\author{Yann Gilpin, Vince Kurtz, and Hai Lin
        \thanks{The partial support of the National Science Foundation (Grant No. CNS-1830335, IIS-1724070) and of the Army Research Laboratory (Grant No. W911NF- 17-1-0072) is gratefully acknowledged.}
\thanks{All authors are with the Department of Electrical Engineering, University of Notre Dame, Notre Dame, IN 46556, USA, {\tt\small ygilpin@nd.edu; vkurtz@nd.edu; hlin1@nd.edu}.}}

\maketitle
\thispagestyle{empty}

\begin{abstract}
Recent years have seen an increasing use of Signal Temporal Logic (STL) as a formal specification language for symbolic control, due to its expressiveness and closeness to natural language. Furthermore, STL specifications can be encoded as cost functions using STL's robust semantics, transforming the synthesis problem into an optimization problem. Unfortunately, these cost functions are non-smooth and non-convex, and exact solutions using mixed-integer programming do not scale well. Recent work has focused on using smooth approximations of robustness, which enable faster gradient-based methods to find local maxima, at the expense of soundness and/or completeness. We propose a novel robustness approximation that is smooth everywhere, sound, and asymptotically complete. Our approach combines the benefits of existing approximations, while enabling an explicit tradeoff between conservativeness and completeness. 
\end{abstract}

\begin{IEEEkeywords}
    Autonomous systems; Robotics; Intelligent systems
\end{IEEEkeywords}

\section{Introduction}

\IEEEPARstart{T}{emporal} logics provide an intuitive way to specify high-level objectives for autonomous systems. Logics like Linear Temporal Logic (LTL), Computation Tree Logic (CTL) and Signal Temporal Logic (STL) can express many desirable system behaviors. Certain temporal logic formulas can even be derived from natural language commands \cite{dzifcak2009and,kress2007structured,lignos2015provably}. 

Given a temporal logic specification, the symbolic control problem is to design a trajectory that satisfies the specification. Early work on this problem focused on finding discrete abstractions of continuous systems. Given a finite transition system and a specification, automata-theoretic methods can be used to check or enforce satisfaction (see e.g., \cite{belta2007symbolic,kloetzer2008fully,gol2013language,plaku2016motion,belta2017formal} and references therein). However, these methods tend to scale poorly with system dimension and specification complexity.

With this in mind, much recent work has focused on STL specifications, which are defined over continuous signals \cite{maler2004monitoring}. Output trajectories provide a convenient such signal, especially for robotics \cite{raman2014model,sadraddini2015robust,pant2017smooth,mehdipour2019arithmetic}. Furthermore, STL robust semantics allows us to cast the synthesis problem as an optimization problem. Robust semantics define a ``robustness measure'' which maps a signal to a scalar which is positive only if the signal satisfies the specification. Symbolic control is as simple as finding a control sequence that maximizes robustness. 

Unfortunately, the standard robustness measure is non-smooth and non-convex for most interesting specifications, due to the use of $\min$ and $\max$ operators. Nonetheless, the optimization can be solved exactly using Mixed-Integer Programming (MIP) \cite{belta2019formal}. This approach is both sound (any satisfying trajectory found by MIP does satisfy the specification) and complete (if a satisfying trajectory exists, the MIP-based algorithm will find it). However, the MIP approach scales poorly with the system dimension, the length of trajectory considered, and the complexity of the specification. This stems from the fact that a new integer variable is introduced for every predicate at each timestep, and the fact that MIP complexity is exponential in the number of integer variables \cite{belta2019formal}.


To address this issue, recent work has focused on smooth approximations of the robustness measure. By enabling the use of fast gradient-based optimization methods, this approach tends to perform well on a wide variety of problems, and offers considerable speed and scalability improvements over MIP \cite{pant2017smooth,pant2018fly,mehdipour2019arithmetic,haghighi2019control,lindemann2019robust}, though it may not find the global maximum unless the specification is convex \cite{lindemann2019robust}.

In \cite{pant2017smooth}, the widely known Log-Sum-Exponential (LSE) approximation of $\min$ and $\max$ was used to define an approximate robustness measure. This LSE approximation is smooth everywhere, and an analytical form of the gradient accelerates optimization. Furthermore, this method is \textit{asymptotically complete}, meaning that the approximation can be arbitrarily close to the true robustness. This approach is not sound, however, since it is based on an overapproximation of the $\max$ operator. 

To address this soundness issue, \cite{mehdipour2019arithmetic} introduced an approximation of the robustness measure based on arithmetic and geometric means. This Arithmetic-Geometric Mean (AGM) robustness retains many of the computational advantages of the LSE approach \cite{mehdipour2019arithmetic}, but ensures soundness. Additionally, AGM robustness is more conservative in the sense that trajectories derived using the AGM approach tend to be more robust to external disturbances. This comes at a price, however, since the AGM robustness is not smooth everywhere \cite{mehdipour2019arithmetic}. 

In this letter, we propose a simple approximation of the robustness measure that combines the advantages of the LSE and AGM robustness approximations. Like the LSE robustness, our proposed approximation is asymptotically complete and smooth everywhere. Like the AGM robustness, our proposed approximation is an under-approximation of the true robustness measure, and therefore our method is sound. Furthermore, like the AGM robustness, our method can introduce extra conservativeness to increase robustness to external disturbances. Unlike the existing approaches, however, our approximation allows us to explicitly regulate the degree of conservativeness with several tunable parameters, and to trade off this extra conservativeness with asymptotic completeness.  

The remainder of this letter is organized as follows: a brief introduction to STL is provided in Section \ref{sec:background}. Our main results are presented in Section \ref{sec:smooth_robustness}. Section \ref{sec:simulation} provides simulation comparisons of our approach with other smooth robustness measures, and Section \ref{sec:conclusion} concludes the letter.

\section{Background}\label{sec:background}

In this letter, we consider nonlinear discrete-time systems of the following form:
\begin{equation}\label{eq:system}
\left\{
\begin{gathered}
    \bm{x}_{t+1} = f(\bm{x}_t,\bm{u}_t), \\
    \bm{y}_t = g(\bm{x}_t,\bm{u}_t),
\end{gathered}\right.
\end{equation}
where $\bm{x}_t \in { \mathcal{X} \subseteq~} \mathbb{R}^n$ is the system state, $\bm{u}_t \in { \mathcal{U} \subseteq~} \mathbb{R}^m$ is a control input, and $\bm{y}_t \in {\mathcal{Y} \subseteq~} \mathbb{R}^p$ is an output signal. Given such a system, an initial condition $\bm{x}_0$, and a bounded-time STL specification $\varphi$, our goal is to design a control sequence $\bm{u} = \bm{u}_0\bm{u}_1...\bm{u}_T$ such that the resulting output signal $\bm{y} = \bm{y}_0\bm{y}_1...\bm{y}_T$ satisfies $\varphi$. {We assume that $f(\cdot,\cdot)$ and $g(\cdot)$ are continuous and locally Lipschitz.}


\subsection{STL Robust Semantics}

In this section we give a brief introduction to the syntax and semantics of STL. Further details can be found in \cite{belta2019formal}. STL syntax is defined as
\begin{equation}\label{eq:syntax}
    \varphi := \pi \mid \lnot \varphi \mid \varphi_1 \land \varphi_2 \mid \varphi_1 \until_{[t_1,t_2]} \varphi_2,
\end{equation}
where $\pi = (\mu^\pi(\bm{y}_t)-c \geq 0)$ is a predicate defined by the function $\mu^{\pi} : \mathbb{R}^p \to \mathbb{R}$. Many STL synthesis methods require $\mu^{\pi}(\cdot)$ and system (\ref{eq:system}) to be linear \cite{belta2019formal}, but this restriction is not necessary for our approach. Negation ($\lnot$) and conjuction ($\land$) can be used to derive other boolean operations like disjuction ($\lor$) and implication ($\implies$). Similarly the temporal operator until ($\varphi_1\until_{[t_1,t_2]}\varphi_2)$ can be used to construct eventually ($\eventually_{[t_1,t_2]}\varphi$) and always ($\always_{[t_1,t_2]}\varphi$) operators. We restrict ourselves to bounded-time STL formulas, i.e., $t_2$ is finite. 

We define the meaning, of STL using qualitative or ``robust'' semantics. These semantics associate a scalar value (the robustness measure) with a signal. The signal satisfies a given specification if and only if the associated robustness measure is positive. Given signal $\bm{y} = \bm{y}_0\bm{y}_1 ... \bm{y}_T$, we denote the suffix starting at timestep $t$ as $(\bm{y},t) = \bm{y}_t\bm{y}_{t+1} ... \bm{y}_T$. The STL robust semantics are defined as follows:

\begin{definition}[STL Robust Semantics]~\label{def:robust_semantics}
\begin{itemize}
    \item $\bm{y} \vDash \varphi \iff \rho^\varphi((\bm{y},0)) > 0$
    \item $\rho^\pi( (\bm{y},t) ) = \mu^\pi(\bm{y}_t) - c$
    \item $\rho^{\lnot \varphi}( (\bm{y},t) ) = - \rho^\varphi( (\bm{y},t) )$
    \item $\rho^{\varphi_1 \land \varphi_2}( (\bm{y},t) ) = \min\big(\rho^{\varphi_1}( (\bm{y},t) ), \rho^{\varphi_2}( (\bm{y},t) ) \big)$
    \item $\rho^{\eventually_{[t_1,t_2]} \varphi}( (\bm{y},t) ) = \max_{t'\in[t+t_1, t+t_2]}\big(\rho^\varphi( (\bm{y},t') )\big)$
    \item $\rho^{\always_{[t_1,t_2]} \varphi}( (\bm{y},t) ) = \min_{t'\in[t+t_1, t+t_2]}\big(\rho^\varphi( (\bm{y},t') )\big)$
    \item $\rho^{\varphi_1 \until_{[t_1,t_2]} \varphi_2}( (\bm{y},t) ) =  \max_{t'\in[t+t_1, t+t_2]}\bigg( \\ \min\Big(\Big[\rho^{\varphi_1}( (\bm{y},t') ), \min_{t'' \in[t+t1,t']}\big(\rho^{\varphi_2}( (\bm{y},t'') )\big)\Big]^T\Big)\bigg)$.
\end{itemize}
\end{definition}

%

\subsection{Optimization and Smooth Approximations}

The STL robust semantics allow us to cast the synthesis problem (find a control sequence to satisfy a specification) as an optimization problem over the control sequence $\bm{u} = \bm{u}_0\bm{u}_1...\bm{u}_T$ as follows:
\begin{align}\label{eq:nonconvex_opt}
    \max_{\bm{u}} &~ \rho^{\varphi}((\bm{y},0)) \\    
    \text{s.t. } & \bm{x}_{t+1} = f(\bm{x}_t,\bm{u}_t) \\
                 & \bm{y}_t = g(\bm{x}_t,\bm{u}_t) \\
                 & {\bm{x}_t \in \mathcal{X}, \bm{u}_t \in \mathcal{U}, \bm{y}_t \in \mathcal{Y}}
\end{align}
If the optimal control sequence generates an output signal $\bm{y}^*$ such that $\rho^{\varphi}((\bm{y}^*,0){)} > 0$, then we have found a trajectory that ensures satisfaction of the specification. 

Unfortunately, this problem is non-smooth and non-convex in general. For the special case of linear systems and linear predicates, (\ref{eq:nonconvex_opt}) can be encoded as a Mixed Integer Program (MIP) \cite{belta2019formal}. But the MIP approach has serious scalability restrictions: the resulting algorithm has exponential complexity in both the number of predicates and the number of timesteps $T$ considered. This makes the MIP approach impractical for high-dimensional systems, complex specifications, and long-duration tasks. 

The lack of smoothness in the robust semantics is due to the use of $\min$ and $\max$ operators. To deal with this , \cite{pant2017smooth} proposed using the following smooth approximations to define an approximate robustness measure:
\begin{align}\label{eq:lse_approximation}
    & \max([a_{1},...,a_{m}]^T) \approx \frac{1}{k}\log \Big(\sum_{i=1}^{m} e^{k a_{i}} \Big) \\
    & \min([a_{1},...,a_{m}]^T) \approx -\frac{1}{k}\log \Big(\sum_{i=1}^{m} e^{-k a_{i}} \Big).
\end{align}

This log-sum-exponential (LSE) approximation is smooth, and an analytical form of the gradient exists. The resulting robustness approximation, which we will refer to as ``LSE robustness'' is also smooth everywhere, and approaches the true robustness as $k \to \infty$ \cite{pant2017smooth}. We refer to this property as \textit{asymptotic completeness}: as $k \to \infty$, positive LSE robustness becomes a necessary and sufficient condition for satisfying the specification\footnote{Note that the notion of asymptotic completeness used in this letter is a property of the robustness measure, not of the synthesis algorithm. This is in contrast to the common usage of the term ``completeness'' in the formal methods literature to refer to an algorithm which finds a solution if one exists.}. The resulting optimization is still non-convex, but smooth optimization techniques like Sequential Quadratic Programming (SQP) can find a local maximum\cite{pant2017smooth,pant2018fly}. 

These gradient-based techniques significantly outperform MIP in terms of scalability, and despite finding only local maxima, perform well on a variety of difficult problems \cite{pant2017smooth,pant2018fly,mehdipour2019arithmetic}. For finite $k$, however, LSE robustness can over-approximate the true robustness. This is because
\begin{equation*}
    \max([a_{1},...,a_{m}]^T) \leq \frac{1}{k}\log \Big(\sum_{i=1}^{m} e^{k  a_{i}} \Big),
\end{equation*}
and means that LSE robustness is not sound: a signal may have positive LSE robustness but not satisfy the specification. 

To address this issue, \cite{mehdipour2019arithmetic} proposed a new under-approximation of the robustness. This robustness measure is based on arithmetic and geometric means, and we will refer to it as ``AGM robustness''. AGM robustness is sound: positive AGM robustness is a sufficient condition for satisfying a specification\footnote{{
In this letter, we use ``soundness'' to describe when a positive robustness value implies satisfaction, and ``completeness'' to refer to the converse. This matches the standard use of the terms in reference to algorithms \cite{baier2008principles}, but is in contrast with the terminology used in \cite{mehdipour2019arithmetic}, where ``soundness'' is used to refer to both of these properties. 
}}. Additionally, the conservativeness introduced by this under-approximation leads to additional robustness against disturbances and modeling errors. The price of this additional conservativeness is that the gradient of AGM robustness is not defined everywhere. This means that convergence to a local maximum cannot be guaranteed with gradient-based optimization techniques.

In this letter, we propose a deceptively simple smooth approximation of STL robustness which combines the best of AGM and LSE robustness. Like LSE robustness, our proposed measure is asymptotically complete and smooth everywhere. Like AGM robustness, our proposed robustness measure is sound and can introduce additional conservativeness (see Table \ref{tab:comparison_table} for a visual comparison). Furthermore, our approach offers an \textit{explicit tradeoff between conservativeness and completeness}. For large parameter values, our proposed robustness measure approaches the true robustness. For smaller parameter values, our proposed robustness measure is a significant under-approximation of the true robustness and results in more conservative system behavior. This smooth robustness measure is introduced in the following section.

\section{Proposed Smooth Robustness}\label{sec:smooth_robustness}

Following the spirit of \cite{pant2017smooth}, we define a smooth robustness measure via a pair of approximations, one for maximum and one for minimum. By replacing instances of $\min$ and $\max$ in Definition \ref{def:robust_semantics}, we obtain a smooth approximation of the robustness function. 

\subsection{Several Smooth Approximations}
For minimum, we use the log-sum-exponential approximation
\begin{equation}\label{eq:ourmin}
    \widetilde{\min}([a_i,..,a_m]^T) = -\frac{1}{k_1}\log \Big(\sum_{i=1}^{m} e^{-k_1 a_{i}} \Big),
\end{equation}
where $k_1 > 0$ is an adjustable parameter. This function is an under-approximation of the true minimum, and approaches the true minimum as $k_1 \to \infty$. This is expressed formally in the following Lemma: 
\begin{lemma}\label{lemma:min}
    Consider a vector $\mathbf{a} = [a_1, a_2, \dots, a_m]^T$. The smooth minimum (\ref{eq:ourmin}) is an under-approximation of the true minimum, and an associated error bound is given by: 
     \begin{equation}\label{eq:ourminbnd}
        \min(\mathbf{a}) - \widetilde{\min}(\mathbf{a}) \leq \frac{\log(m)}{k_1}.
     \end{equation}
\end{lemma}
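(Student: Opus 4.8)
The plan is to reduce both assertions to a two-sided estimate on the inner sum $S = \sum_{i=1}^{m} e^{-k_1 a_i}$ and then apply the strictly decreasing map $x \mapsto -\frac{1}{k_1}\log x$, which converts $S$ into $\widetilde{\min}(\mathbf{a})$. Let $a_{\min} = \min(\mathbf{a})$ and let $j$ be an index with $a_j = a_{\min}$. The whole proof rests on the sandwich
\[
e^{-k_1 a_{\min}} \;\le\; S \;\le\; m\, e^{-k_1 a_{\min}}.
\]
The lower bound holds because every summand is positive and the $j$-th term equals $e^{-k_1 a_{\min}}$; the upper bound holds because $a_i \ge a_{\min}$ together with $-k_1 < 0$ gives $-k_1 a_i \le -k_1 a_{\min}$, hence $e^{-k_1 a_i} \le e^{-k_1 a_{\min}}$ for each of the $m$ terms.

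From the lower bound I would derive the under-approximation claim. Taking logarithms (monotone) of $S \ge e^{-k_1 a_{\min}}$ and then multiplying through by the negative factor $-1/k_1$, which reverses the inequality, yields $\widetilde{\min}(\mathbf{a}) \le a_{\min}$, exactly the stated under-approximation. From the upper bound I would derive the error bound: taking logarithms of $S \le m\, e^{-k_1 a_{\min}}$ gives $\log S \le \log m - k_1 a_{\min}$, and applying the same decreasing map produces $\widetilde{\min}(\mathbf{a}) \ge a_{\min} - \frac{\log m}{k_1}$. Rearranging this is precisely the bound (\ref{eq:ourminbnd}).

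The argument is elementary, so there is no substantive obstacle. The only point demanding care is the bookkeeping of inequality directions: each time the estimate on $S$ is pushed through $x \mapsto -\frac{1}{k_1}\log x$, the negative prefactor $-1/k_1$ flips the sense of the inequality, so the lower bound on $S$ becomes an upper bound on $\widetilde{\min}(\mathbf{a})$ and vice versa. Keeping this orientation consistent is all that is needed to obtain both the under-approximation property and the error bound simultaneously.
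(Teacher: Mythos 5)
Your proof is correct, and it follows the same underlying idea as the paper's proof --- comparing the sum $S=\sum_{i=1}^m e^{-k_1 a_i}$ against $m\,e^{-k_1 a_{\min}}$ --- but your route is actually tighter logically. The paper's proof begins by \emph{asserting} that the approximation error is maximized when all entries of $\mathbf{a}$ are equal, and then evaluates the error only in that worst case; that assertion is left unjustified, so the paper's argument has a (fixable) gap. Your two-sided sandwich
\[
e^{-k_1 a_{\min}} \;\le\; S \;\le\; m\,e^{-k_1 a_{\min}}
\]
avoids the need for any worst-case claim: pushing the lower bound through the decreasing map $x \mapsto -\tfrac{1}{k_1}\log x$ gives the under-approximation property $\widetilde{\min}(\mathbf{a}) \le \min(\mathbf{a})$, and pushing the upper bound through gives the error bound (\ref{eq:ourminbnd}) directly, for an arbitrary vector $\mathbf{a}$. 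One minor difference in the conclusions: the paper states the strict inequality $\min(\mathbf{a}) > \widetilde{\min}(\mathbf{a})$ (which in fact only holds for $m \ge 2$, since for $m=1$ the two coincide), whereas your non-strict version $\widetilde{\min}(\mathbf{a}) \le \min(\mathbf{a})$ is exactly what the lemma's statement and its later uses (soundness in Theorem \ref{thm:soundness}) require.
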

\begin{proof}
    The approximation is the least accurate when the input arguments are all equal: 
    \begin{equation*}
        a_1 = a_2 = \dots = a_m.
    \end{equation*}
    \begin{align*}
        \min(\mathbf{a}) - \widetilde{\min}(\mathbf{a})  &= a_m -\frac{-1}{k_1} \log{\sum_{i=1}^{m}e^{-k_1 a_i}}\\
        &= a_m -\frac{-1}{k_1} \log{(m e^{-k_1 a_m})} \leq \frac{\log{m}}{k_1},
    \end{align*}
    Where equality holds only in the worst case. Noting that $m>0$ and $k_1>0$, it is clear that $\min(\mathbf{a}) > \widetilde{\min}(\mathbf{a})$.
\end{proof}

Additionally this function has a well defined gradient: 
\begin{equation*}
    \nabla_{a_i} \widetilde{\min}(\mathbf{a}) = \frac{e^{k_1 a_i}}{\sum_{j=1}^m e^{k_1 a_j}},
\end{equation*}
which will ultimately enable more efficient optimization over our approximate robustness measure. 

\begin{table}[]
    \centering
    \caption{Comparison with related work.}
    \begin{tabular}{c|c c c }
        ~ & \cite{pant2017smooth} & \cite{mehdipour2019arithmetic} & Ours \\
        \hline
        Sound & \xmark & \cmark & \cmark \\
        (Asymptotically) Complete & \cmark & \cmark & \cmark \\
        Differentiable Everywhere &  \cmark & \xmark & \cmark
    \end{tabular}
    \label{tab:comparison_table}
\end{table}

For the maximum function, we adopt the following well-known approximation \cite{lange2014applications}: 
 \begin{equation}\label{eq:ourmax}
    \widetilde{\max}([a_{1},a_{2},\dots,a_{m}]^T) = \frac{\sum_{i=1}^{m} a_{i} e^{k_2 a_{i}}}{\sum_{i=1}^{m} e^{k_2 a_{i}}}
\end{equation}
where $k_2 > 0$ is an adjustable parameter. As with the minimum approximation (\ref{eq:ourmin}), this is an under-approximation of the true maximum, which approaches the true maximum for large $k_2$.
This is formalized in the following Lemma:

\begin{lemma}\label{lemma:max}
    Consider a vector $\mathbf{a} = [a_1, a_2, \dots, a_m]^T$. Assume (without loss of generality) that $m \geq 2$ and that $\mathbf{a}$ is sorted from largest to smallest. Then the smooth maximum (\ref{eq:ourmax}) is an under-approximation of the true maximum, and an associated error bound is given by
    \begin{equation}\label{eq:ourmaxbnd}
        \max(\mathbf{a}) - \widetilde{\max}(\mathbf{a}) \leq \frac{a_1-a_m}{\frac{e^{k_2(a_1-a_2)}}{m-1}+1}.
    \end{equation}
\end{lemma}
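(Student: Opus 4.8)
The plan is to exploit the fact that the smooth maximum (\ref{eq:ourmax}) is nothing but a convex combination of the entries of $\mathbf{a}$. Writing $w_i = e^{k_2 a_i}/\sum_{j=1}^m e^{k_2 a_j}$, each weight is strictly positive and $\sum_{i=1}^m w_i = 1$, so $\widetilde{\max}(\mathbf{a}) = \sum_{i=1}^m w_i a_i$ is a weighted average of the $a_i$. Since a weighted average never exceeds the largest entry, $\widetilde{\max}(\mathbf{a}) \leq a_1 = \max(\mathbf{a})$, which immediately settles the under-approximation claim. This reframing also turns the error into a weighted sum of gaps: using $\sum_i w_i = 1$, I would write $\max(\mathbf{a}) - \widetilde{\max}(\mathbf{a}) = a_1 - \sum_i w_i a_i = \sum_{i=1}^m w_i (a_1 - a_i) = \sum_{i=2}^m w_i (a_1 - a_i)$, where the $i=1$ term drops out.

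Next I would coarsen each gap. Because $\mathbf{a}$ is sorted, $0 \leq a_1 - a_i \leq a_1 - a_m$ for every $i$, so the error is at most $(a_1 - a_m)\sum_{i=2}^m w_i = (a_1 - a_m)(1 - w_1)$. Substituting the definition of $w_1$ gives the compact ratio $1 - w_1 = \left(\sum_{j=2}^m e^{k_2 a_j}\right)\big/\left(\sum_{j=1}^m e^{k_2 a_j}\right)$, which I would rewrite as $1/\big(e^{k_2 a_1}/\sum_{j=2}^m e^{k_2 a_j} + 1\big)$ so that it is in the same form as the target denominator.

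The crux is then to show $e^{k_2 a_1}/\sum_{j=2}^m e^{k_2 a_j} \geq e^{k_2(a_1 - a_2)}/(m-1)$, equivalently $\sum_{j=2}^m e^{k_2(a_j - a_2)} \leq m - 1$. This is where the sorting does the work: for each $j \geq 2$ we have $a_j \leq a_2$, hence $e^{k_2(a_j - a_2)} \leq 1$, and summing the $m-1$ such terms gives the bound. Since $1/(x+1)$ is decreasing in $x$, the lower bound on $e^{k_2 a_1}/\sum_{j=2}^m e^{k_2 a_j}$ yields $1 - w_1 \leq 1/\big(e^{k_2(a_1-a_2)}/(m-1) + 1\big)$, and multiplying by $(a_1 - a_m)$ produces exactly (\ref{eq:ourmaxbnd}). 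I expect the only delicate point to be keeping the direction of the inequalities straight through the reciprocal, since the quantity $e^{k_2 a_1}/\sum_{j\geq 2} e^{k_2 a_j}$ must be bounded \emph{below} in order to bound the error \emph{above}.
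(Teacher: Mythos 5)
Your proof is correct and follows essentially the same route as the paper's: after dropping the $i=1$ term, both arguments bound each gap $a_1-a_i$ by $a_1-a_m$ and then invoke $\sum_{j=2}^m e^{k_2 a_j} \leq (m-1)e^{k_2 a_2}$ (your form $\sum_{j\geq 2} e^{k_2(a_j-a_2)} \leq m-1$ is the same inequality), finishing by monotonicity of the reciprocal. Your convex-combination framing with weights $w_i$ is only a repackaging of the paper's single-fraction manipulation, though it has the minor benefit of making the under-approximation claim $\widetilde{\max}(\mathbf{a}) \leq \max(\mathbf{a})$ explicit, which the paper leaves implicit.
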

\begin{proof}
    \begin{align*}
        \max(\mathbf{a})-\widetilde{\max}(\mathbf{a}) &= a_1 - \frac{\sum_{i=1}^m a_i e^{k_2 a_i}}{\sum_{i=1}^m e^{k_2 a_i}} \\
        &= \frac{a_1 \sum_{i=2}^m e^{k_2 a_i}-\sum_{i=2}^m a_i e^{k_2 a_i}}{e^{k_2 a_1} + \sum_{i=2}^m e^{k_2 a_i}}.
    \end{align*}
    
    Letting $K = \sum_{i=2}^m e^{k_2 a_i}$, we have
    \begin{align*}
        \max(\mathbf{a})-\widetilde{\max}(\mathbf{a}) &\leq \frac{a_1 K-a_m K}{e^{k_2 a_1} + K}\\
        &\leq \frac{a_1-a_m}{\frac{e^{k_2 (a_1-a_2)}}{m-1}+1},
    \end{align*}
    where the last inequality holds since $K \leq (m-1)e^{k_2 a_2}$.
\end{proof}

By using an under-approximation for maximum, we can be confident that if our smooth robustness indicates satisfaction (our robustness measure is positive) then the associated trajectory does in fact satisfy the formula. Additionally this approximation has a well defined gradient: 
\begin{equation*}
    \nabla_{a_{i}} \widetilde{\max}(\mathbf{a}) = \frac{e^{k_2 a_{i}}}{\sum_{j=1}^{N} e^{k_2 a_{j}}}[1 + k_2(a_i - \widetilde{\max}(\mathbf{a}))].
\end{equation*}

\begin{remark}
    The smooth maximum (\ref{eq:ourmax}) suggests that there may be a deeper connection between our proposed smooth robustness and AGM robustness. Specifically, taking $k_2=0$, $\widetilde{\max}(\mathbf{a})$ is the arithmetic mean of vector $\mathbf{a}$.
\end{remark}

\subsection{Smooth Robustness}
We can now use the approximations $\widetilde{\min}$ and $\widetilde{\max}$ to define a new smooth robustness measure for STL:
\begin{definition}[Smooth STL Robust Semantics]~\label{def:smooth_semantics}
\begin{itemize}
    \item $\tilde{\rho}^{\varphi}((\bm{y},0)) > 0 \implies \bm{y} \vDash \varphi$
    \item $\tilde{\rho}^\pi( (\bm{y},t) ) = \mu^\pi(\bm{y}(t)) - c$
    \item $\tilde{\rho}^{\lnot \pi}( (\bm{y},t) ) = - \tilde{\rho}^\pi( (\bm{y},t) )$
    \item $\tilde{\rho}^{\varphi_1 \lor \varphi_2}( (\bm{y},t) ) = \widetilde{\max}\big(\tilde{\rho}^{\varphi_1}( (\bm{y},t) ), \tilde{\rho}^{\varphi_2}( (\bm{y},t) ) \big)$
    \item $\tilde{\rho}^{\varphi_1 \land \varphi_2}( (\bm{y},t) ) = \widetilde{\min}\big(\tilde{\rho}^{\varphi_1}( (\bm{y},t) ), \tilde{\rho}^{\varphi_2}( (\bm{y},t) ) \big)$
    \item $\tilde{\rho}^{\eventually_{[t_1,t_2]} \varphi}( (\bm{y},t) ) = \widetilde{\max}_{t'\in[t+t_1, t+t_2]}\big(\tilde{\rho}^\varphi( (\bm{y},t') )\big)$
    \item $\tilde{\rho}^{\always_{[t_1,t_2]} \varphi}( (\bm{y},t) ) = \widetilde{\min}_{t'\in[t+t_1, t+t_2]}\big(\tilde{\rho}^\varphi( (\bm{y},t') )\big)$
    \item $\tilde{\rho}^{\varphi_1 \until_{[t_1,t_2]} \varphi_2}( (\bm{y},t) ) =  \widetilde{\max}_{t'\in[t+t_1, t+t_2]}\Bigg(\\ \widetilde{\min}\Big(\Big[\tilde{\rho}^{\varphi_1}( (\bm{y},t') ), \widetilde{\min}_{t'' \in[t+t1,t']}\big(\tilde{\rho}^{\varphi_2}( (\bm{y},t'') )\big)\Big]^T\Big)\Bigg)$.
    \item {$\tilde{\rho}^{\varphi_1 \release_{[t_1,t_2]} \varphi_2}( (\bm{y},t) ) =  \widetilde{\max}_{t'\in[t+t_1, t+t_2]}\Bigg(\\ \widetilde{\min}\Big(\Big[\tilde{\rho}^{\varphi_2}( (\bm{y},t') ), \widetilde{\min}_{t'' \in[t+t1,t']}\big(\tilde{\rho}^{\varphi_1}( (\bm{y},t'') )\big)\Big]^T\Big)\Bigg)$.}
\end{itemize}
\end{definition}

\begin{remark}
    In formulating this smooth robustness, we assume (without loss of generality) that STL formulas are written in disjunctive normal form, i.e., the negation operator is only applied to predicates $\pi${, and we introduce the release operator $\varphi_1 \release_{[t_1,t_2]} \varphi_2$}. Any STL formula can be re-written to be in disjunctive normal form, and the resulting formula has length linear in the length of the original formula \cite{raman2014model}.
\end{remark}

Note that this smooth robust semantics provides only a sufficient condition for satisfaction in the general case. A positive smooth robustness measure is a necessary condition for satisfaction only in the limit as $k_1 \to \infty$, $k_2 \to \infty$.

The soundness of our proposed smooth robustness measure is stated formally in the following Theorem:
\begin{theorem}\label{thm:soundness}
    For any discrete-time signal $\bm{y}$ and any bounded-time STL specification $\varphi$, if $\tilde{\rho}((\bm{y},0)) > 0$ then the signal $\bm{y}$ satisfies the specification, i.e., $\bm{y} \vDash \varphi$.
\end{theorem}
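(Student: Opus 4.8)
The plan is to prove a slightly stronger statement from which the theorem follows at once: that the smooth robustness under-approximates the true robustness pointwise, i.e., $\tilde{\rho}^{\varphi}((\bm{y},t)) \leq \rho^{\varphi}((\bm{y},t))$ for every subformula $\varphi$ of the specification and every timestep $t$. Granting this, if $\tilde{\rho}^{\varphi}((\bm{y},0)) > 0$ then $\rho^{\varphi}((\bm{y},0)) \geq \tilde{\rho}^{\varphi}((\bm{y},0)) > 0$, so by Definition \ref{def:robust_semantics} we conclude $\bm{y} \vDash \varphi$. I would establish the pointwise inequality by structural induction on $\varphi$, using the disjunctive-normal-form assumption so that negation reaches only predicates.

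For the base case, a predicate gives $\tilde{\rho}^{\pi} = \mu^{\pi}(\bm{y}_t) - c = \rho^{\pi}$, and a negated predicate gives $\tilde{\rho}^{\lnot\pi} = -\tilde{\rho}^{\pi} = \rho^{\lnot\pi}$; both hold with equality. For the inductive step I would treat each operator of Definition \ref{def:smooth_semantics} with the same two-step comparison. Taking conjunction as the template: by Lemma \ref{lemma:min}, $\tilde{\rho}^{\varphi_1 \land \varphi_2} = \widetilde{\min}(\tilde{\rho}^{\varphi_1}, \tilde{\rho}^{\varphi_2}) \leq \min(\tilde{\rho}^{\varphi_1}, \tilde{\rho}^{\varphi_2})$, and then by the inductive hypothesis together with the monotonicity of the genuine $\min$, $\min(\tilde{\rho}^{\varphi_1}, \tilde{\rho}^{\varphi_2}) \leq \min(\rho^{\varphi_1}, \rho^{\varphi_2}) = \rho^{\varphi_1 \land \varphi_2}$. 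Disjunction, $\always$, and $\eventually$ follow the identical pattern, invoking Lemma \ref{lemma:max} for each $\widetilde{\max}$ and Lemma \ref{lemma:min} for each $\widetilde{\min}$, then closing with monotonicity of the true operator over the relevant time window.

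The subtlety I expect to matter most, and would flag explicitly, is the order of these two steps. It is tempting to push the inductive hypothesis through the smooth operator first, but $\widetilde{\max}$ is \emph{not} monotonic in its arguments: its gradient $\frac{e^{k_2 a_i}}{\sum_{j} e^{k_2 a_j}}[1 + k_2(a_i - \widetilde{\max}(\mathbf{a}))]$ can be negative when some $a_i$ falls well below the weighted mean, so $\tilde{\rho}^{\varphi_i} \leq \rho^{\varphi_i}$ does not by itself yield $\widetilde{\max}(\tilde{\rho}) \leq \widetilde{\max}(\rho)$. The remedy is to apply the under-approximation Lemma \emph{first}, collapsing the smooth operator to the true operator evaluated at the smooth sub-robustnesses, and only then to invoke monotonicity, which is a property of the genuine $\min$ and $\max$ rather than of their smooth surrogates.

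The main obstacle will be the $\until$ operator (and its dual $\release$), whose smooth semantics nest an outer $\widetilde{\max}$ over $t'$, a pairwise $\widetilde{\min}$, and an inner $\widetilde{\min}$ over $t''$. I would chain the same two-step bound layer by layer from the inside out: the inner $\widetilde{\min}_{t''}$ is bounded by $\min_{t''}$ of the smooth operand (Lemma \ref{lemma:min}) and then by $\min_{t''}$ of the true operand (hypothesis plus monotonicity); this feeds the pairwise $\widetilde{\min}$, handled identically; and the outer $\widetilde{\max}_{t'}$ is bounded via Lemma \ref{lemma:max} followed by monotonicity of the true $\max_{t'}$. Composing these inequalities reconstructs exactly $\rho^{\varphi_1 \until_{[t_1,t_2]} \varphi_2}$ of Definition \ref{def:robust_semantics}, and the dual $\release$ case is structurally identical, which closes the induction and hence the theorem.
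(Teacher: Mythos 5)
Your proposal is correct and takes essentially the same route as the paper: a structural induction establishing the pointwise under-approximation $\tilde{\rho}^{\varphi}((\bm{y},t)) \leq \rho^{\varphi}((\bm{y},t))$ case by case via Lemmas \ref{lemma:min} and \ref{lemma:max}, then concluding $\tilde{\rho}^\varphi((\bm{y},0)) > 0 \implies \rho^\varphi((\bm{y},0)) > 0 \implies \bm{y} \vDash \varphi$. If anything, your write-up is more careful than the paper's, which merely lists the case inequalities as ``clear'': your point that one must apply the under-approximation lemmas \emph{before} invoking monotonicity of the true $\min$/$\max$ (since $\widetilde{\max}$ itself is not monotonic in its arguments) is exactly the detail the paper's proof leaves implicit.
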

\begin{proof}
    Recall from Lemmas \ref{lemma:min} and \ref{lemma:max} that $\widetilde{\min}(\mathbf{a}) \leq \min(\mathbf{a})$ and $\widetilde{\max}(\mathbf{a}) \leq \max(\mathbf{a})$. 
    
    It is then clear from Definitions \ref{def:robust_semantics} and \ref{def:smooth_semantics} that $\tilde{\rho}^{\varphi}((\bm{y},t)) < \rho^\varphi((\bm{y},t))$ for any $\varphi$ in disjunctive normal form:
    \begin{itemize}
        \item $\tilde{\rho}^\pi((\bm{y},t)) = \rho^\pi((\bm{y},t))$
        \item $\tilde{\rho}^{\lnot \pi}((\bm{y},t)) = \rho^{\lnot \pi}((\bm{y},t))$
        \item $\tilde{\rho}^{\varphi_1 \land \varphi_2}((\bm{y},t)) \leq \rho^{\varphi_1 \land \varphi_2}((\bm{y},t))$
        \item $\tilde{\rho}^{\varphi_1 \lor \varphi_2}((\bm{y},t)) \leq \rho^{\varphi_1 \lor \varphi_2}((\bm{y},t))$
        \item $\tilde{\rho}^{\varphi_1 \until_{[t_1,t_2]} \varphi_2}((\bm{y},t)) \leq \rho^{\varphi_1  \until_{[t_1,t_2]} \varphi_2}((\bm{y},t))$.
        \item {$\tilde{\rho}^{\varphi_1 \release_{[t_1,t_2]} \varphi_2}((\bm{y},t)) \leq \rho^{\varphi_1  \release_{[t_1,t_2]} \varphi_2}((\bm{y},t))$.}
    \end{itemize}
    
    Since $\rho^{\varphi}((\bm{y},0)) > 0$ is sufficient for satisfaction of $\varphi$, it follows that
    \begin{equation*}
        \tilde{\rho}^\varphi((\bm{y},0)) > 0 \implies \rho^\varphi((\bm{y},0)) > 0 \implies \bm{y} \vDash \varphi,
    \end{equation*}
    and so the Theorem holds. 
    \end{proof}

The advantage of this soundness property is that if a proposed solution is found to have positive smooth robustness, the user can be assured that the trajectory satisfies all constraints. Other robustness approximations either retain this property at the expense of smoothness everywhere (as in the case of AGM robustness) or they sacrifice it for smoothness (as in the case of LSE robustness). Our proposed approximation, on the other hand, is both smooth and sound. 

\begin{figure*}
    \centering
    \begin{subfigure}{0.48\textwidth}
        \centerline{\includegraphics[width=0.6\linewidth]{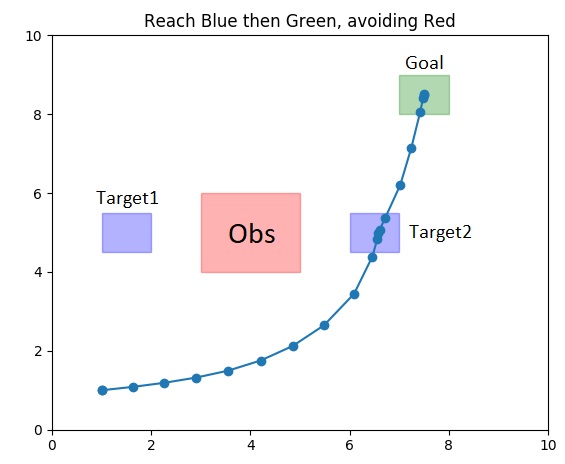}}
        \caption{Locally optimal trajectory with respect to our smooth robustness (Def. \ref{def:smooth_semantics}) for relatively low parameter values ($k_1=k_2=2$). The robot takes several extra steps in the blue target and green goal regions, suggesting greater robustness to external disturbances.{The specification is: $\varphi_{two-target}\always_{[0,10]} ( \lnot Obs ) \land \eventually_{[0,10]} (Target1 \lor Target2) \land \eventually_{[0,10]} (Goal) \land \always_{[0,10]} (-2 \leq u \leq 2)$}.}
        \label{fig:either_or_ours_low_k}
    \end{subfigure}
    \begin{subfigure}{0.48\textwidth}
        \centerline{\includegraphics[width=0.6\linewidth]{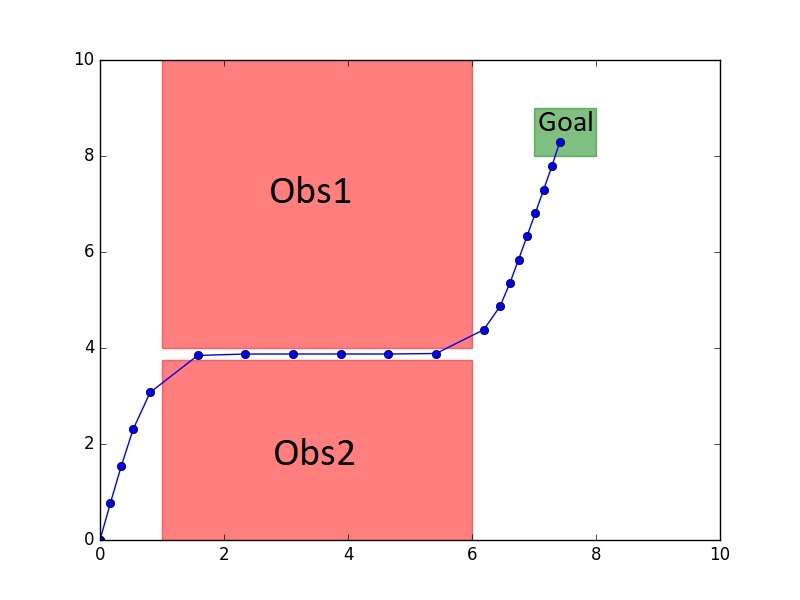}}
        \caption{Locally optimal trajectory with respect to our smooth robustness (Def. \ref{def:smooth_semantics}) for relatively high parameter values ($k_1=k_2=10$). In this situation being too conservative would have obfuscated the solution. The specification is: { $\varphi_{tunnel} = \always_{[0,20]}\lnot (Obs1 \lor Obs2) \land \eventually_{[0,20]} (Goal) \land \\ \always_{[0,10]} (-1 \leq u \leq 1)$}}
        \label{fig:tunnel_ours}
    \end{subfigure}
    \caption{Comparison of our approach with low parameter values (\ref{fig:either_or_ours_low_k}) and high parameter values (\ref{fig:tunnel_ours}). Low values replicate the desired averaging effect of the AGM approach. High values mimic the asymptotic completeness of the LSE approach. The optimizer used the cost function: $J = \tilde{\rho}^{\varphi} + 0.01 u^Tu$}
    \label{fig:our_k_examples}
\end{figure*}

Additionally, our proposed smooth robustness is asymptotically complete, as formalized in the following Theorem:
\begin{theorem}
    For any discrete time signal $\bm{y}$, any finite STL specification $\varphi$, and real-valued scalar $\epsilon > 0$, there exists $\bar{k}_1,\bar{k}_2$ such that $|\rho^\varphi((\bm{y},0)) - \tilde{\rho}^\varphi((\bm{y},0))| \leq \epsilon$ for all $k_1 \geq \bar{k}_1$, $k_2 \geq \bar{k}_2$.
    \label{thmCmplt}
\end{theorem}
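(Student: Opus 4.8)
The plan is to prove the claim by structural induction on the formula $\varphi$, tracking the error $e_\varphi := \rho^\varphi((\bm{y},0)) - \tilde{\rho}^\varphi((\bm{y},0))$, which is nonnegative by Theorem~\ref{thm:soundness} (so $|\rho^\varphi - \tilde{\rho}^\varphi| = e_\varphi$). Since $\bm{y}$ is a fixed finite signal and $\varphi$ a fixed bounded-time formula, every quantity involved is a finite real number, and both the syntax-tree depth $D$ and the number of arguments $m$ at each $\widetilde{\min}$/$\widetilde{\max}$ node are bounded (the latter because $t_2$ is finite); let $M$ denote a uniform bound on $m$. The base case is immediate: Definition~\ref{def:smooth_semantics} represents predicates and negated predicates exactly, so $e_\pi = e_{\lnot\pi} = 0$.

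For the inductive step I would split the error at each compound node into a \emph{local} part and a \emph{propagated} part. Writing a node's exact operator as $\mathrm{op}$ (either $\min$ or $\max$) and its smooth counterpart as $\widetilde{\mathrm{op}}$, and letting $\bm{r}$ and $\tilde{\bm{r}}$ collect the exact and smooth robustness values of the immediate subformulas, I decompose
\[
    e_\varphi = \big[\mathrm{op}(\bm{r}) - \widetilde{\mathrm{op}}(\bm{r})\big] + \big[\widetilde{\mathrm{op}}(\bm{r}) - \widetilde{\mathrm{op}}(\tilde{\bm{r}})\big].
\]
The first bracket is the local approximation error, bounded by Lemma~\ref{lemma:min} or Lemma~\ref{lemma:max}; these bounds tend to $0$ as $k_1,k_2 \to \infty$ (in the degenerate case of a tied maximum, where the Lemma~\ref{lemma:max} bound is loose, I would instead invoke the direct pointwise convergence of the Boltzmann mean to the true maximum). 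The second bracket is the propagation of the subformula errors through the smooth operator, and bounding it requires a stability estimate for $\widetilde{\min}$ and $\widetilde{\max}$.

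The main obstacle, and the step I would treat most carefully, is this stability estimate, because $\widetilde{\max}$ is \emph{not} monotone and its gradient carries an explicit factor of $k_2$, so a naive Lipschitz constant would blow up as $k_2 \to \infty$ and the propagated error could fail to vanish. The key observation is that the $\ell_1$ norm of the gradient is nonetheless bounded \emph{independently} of $k_2$: writing $w_i = e^{k_2 a_i}/\sum_j e^{k_2 a_j}$ and $\delta_i = \widetilde{\max}(\bm{a}) - a_i$, each negative contribution $w_i[1 + k_2(a_i - \widetilde{\max})]$ has magnitude at most $w_i\, k_2\, \delta_i \leq \sup_{u\geq 0} k_2 u\, e^{-k_2 u} = e^{-1}$ (using $w_i \leq e^{-k_2(a_1 - a_i)}$ and $\delta_i \leq a_1 - a_i$), while the signed components sum to $1$; hence $\|\nabla \widetilde{\max}\|_1 \leq 1 + 2(m-1)/e$. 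For $\widetilde{\min}$ the gradient lies in the probability simplex, giving constant $1$. Both operators are therefore $L$-Lipschitz in the $\ell_\infty$ norm with $L := 1 + 2(M-1)/e$ uniform over all nodes and all parameter values, so the second bracket is at most $L \max_i e_{\varphi_i}$.

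Combining the two parts yields the recursion $e_\varphi \leq \delta(k_1,k_2) + L\max_i e_{\varphi_i}$, where $\delta(k_1,k_2) \to 0$ as $k_1,k_2 \to \infty$. Unrolling over the finite depth $D$ gives $e_\varphi \leq \delta(k_1,k_2)\,(1 + L + \cdots + L^{D-1})$, a finite multiple of a quantity that vanishes as the parameters grow. Given $\epsilon > 0$, I would then pick $\bar{k}_1,\bar{k}_2$ large enough that this bound falls below $\epsilon$; monotone decrease of $\delta$ in each parameter ensures the bound persists for all $k_1 \geq \bar{k}_1$, $k_2 \geq \bar{k}_2$, which is exactly the claim. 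The until and release operators are handled identically, since they are finite nestings of the same two smooth operators.
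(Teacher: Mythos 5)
Your proof is correct, and while it shares the paper's skeleton --- structural induction on $\varphi$ with Lemmas~\ref{lemma:min} and \ref{lemma:max} controlling the per-node error --- it supplies two ingredients that the paper's own terse induction omits, and both are genuinely needed. The paper's proof writes each compound case as ``$\leq \epsilon$ by (\ref{eq:epsilon_bound_min}) / (\ref{eq:epsilon_bound_max})'', which implicitly applies the smooth operator to the \emph{exact} sub-robustness values; it has no analogue of your second bracket $\widetilde{\mathrm{op}}(\bm{r}) - \widetilde{\mathrm{op}}(\tilde{\bm{r}})$, i.e., it neither propagates the subformula errors through the operator nor accounts for their accumulation over the nesting depth. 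That is exactly where the real difficulty lies, for the reason you identify: $\widetilde{\max}$ in (\ref{eq:ourmax}) is non-monotone and its gradient carries an explicit factor of $k_2$, so controlling the propagated error uniformly as $k_2 \to \infty$ requires your stability estimate $\|\nabla\widetilde{\max}\|_1 \leq 1 + 2(m-1)/e$ (via $k_2 u\, e^{-k_2 u} \leq e^{-1}$ together with the fact that the signed gradient components sum to $1$, both of which check out), after which the geometric unrolling $\delta(k_1,k_2)\,(1+L+\cdots+L^{D-1})$ over the finite depth closes the induction. You also correctly flag that the bound (\ref{eq:ourmaxbnd}) does not vanish when $a_1 = a_2$ --- indeed the threshold $k_2 \geq \frac{1}{a_1-a_2}\log\left(\frac{a_1-a_m}{\epsilon}(m-1)-1\right)$ used in the paper's proof is undefined there --- and you patch this with pointwise convergence of the Boltzmann mean, which suffices because $\bm{y}$ is a fixed finite signal. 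The only imprecision is your closing appeal to monotone decrease of $\delta$: in the tie case the error need not be monotone in $k_2$, but this is harmless, since pointwise convergence already furnishes a threshold beyond which the error stays below any target, which is all the quantifier ``for all $k_1 \geq \bar{k}_1$, $k_2 \geq \bar{k}_2$'' demands. In short, your argument is the paper's argument made rigorous: what the paper's version buys is brevity, while yours buys a proof that actually survives the non-monotonicity of $\widetilde{\max}$, tied maxima, and deeply nested formulas, at the mild cost of a depth-dependent choice of $\bar{k}_1,\bar{k}_2$, which the theorem statement permits.
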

\begin{proof}
    Recall from Lemma \ref{lemma:min} that for $k_1 \geq \frac{\log(m)}{\epsilon}$, we have
    \begin{equation}\label{eq:epsilon_bound_min}
        | \min(\mathbf{a}) - \widetilde{\min}(\mathbf{a}) | \leq \epsilon.
    \end{equation}
    Similarly, from Lemma \ref{lemma:max} we have that for $k_2 \geq \frac{1}{a_1-a_2}\log\left(\frac{a_1-a_m}{\epsilon}(m-1)-1\right)$,
    \begin{equation}\label{eq:epsilon_bound_max}
        | \max(\mathbf{a}) - \widetilde{\max}(\mathbf{a}) | \leq \epsilon.
    \end{equation}
    
    With this in mind, it is easy to see that the Theorem holds by induction on the definition of $\tilde{\rho}^\varphi$:
    \begin{itemize}
        \item $|\tilde{\rho}^\pi((\bm{y},t)) - \rho^\pi((\bm{y},t))| = 0$
        \item $|\tilde{\rho}^{\lnot \pi}((\bm{y},t)) - \rho^{\lnot \pi}((\bm{y},t))| = 0$
        \item $|\tilde{\rho}^{\varphi_1 \land \varphi_2}((\bm{y},t)) - \rho^{\varphi_1 \land \varphi_2}((\bm{y},t))| \leq \epsilon$ by (\ref{eq:epsilon_bound_min})
        \item $|\tilde{\rho}^{\varphi_1 \lor \varphi_2}((\bm{y},t)) - \rho^{\varphi_1 \lor \varphi_2}((\bm{y},t))| \leq \epsilon$ by (\ref{eq:epsilon_bound_max})
        \item $|\tilde{\rho}^{\varphi_1 \until_{[t_1,t_2]} \varphi_2}((\bm{y},t)) - \rho^{\varphi_1  \until_{[t_1,t_2]} \varphi_2}((\bm{y},t))| \leq \epsilon$ by (\ref{eq:epsilon_bound_min}) and (\ref{eq:epsilon_bound_max}). 
        \item {$|\tilde{\rho}^{\varphi_1 \release_{[t_1,t_2]} \varphi_2}((\bm{y},t)) - \rho^{\varphi_1  \release_{[t_1,t_2]} \varphi_2}((\bm{y},t))| \leq \epsilon$ by (\ref{eq:epsilon_bound_min}) and (\ref{eq:epsilon_bound_max}).}
    \end{itemize}
\end{proof}

This theorem essentially states that for large $k_i$, our smooth robustness $\tilde{\rho}$ approaches the true robustness $\rho$. This ensures that if $k_1$ and $k_2$ are large enough, then the approximation will not hide any potential solutions. 

\section{Simulation Results}\label{sec:simulation}

To illustrate the effectiveness of our proposed smooth robustness, we use it to plan a path for a simple simulated robot.

{In the first scenario, shown in Figure \ref{fig:either_or_ours_low_k}, a robot must avoid an obstacle (red), visit an intermediate target (blue) and arrive at a goal region (green).} We used a Python implementation of our proposed smooth robustness and found an optimal trajectory using scipy's SQP method \cite{2020SciPy-NMeth}. Gradients were computed symbolically with the autograd \cite{maclaurin2015autograd} package. For small values of $k_1$ and $k_2$, we observe averaging behavior similar to \cite{mehdipour2019arithmetic} (Figure \ref{fig:either_or_ours_low_k}). Notice how the trajectory arrives quickly at each target and stays for as long as possible. This conservativeness improves robustness to disturbances \cite{mehdipour2019arithmetic}. 

For large values of $k_1$ and $k_2$, our smooth robustness closely approximates the true robustness. This asymptotic completeness is a desirable feature of LSE robustness, and enables navigation through narrow passages, as shown in Figure \ref{fig:tunnel_ours}. In this scenario, the robot must pass through a narrow tunnel between red obstacles\footnote{While the trajectories appear to clip the obstacles, this is merely an artifact of the time discritization. In practice, this can be resolved with a smaller sampling period or by inflating the obstacles.} before reaching the green goal region. This problem requires the smooth robustness to be close to the actual robustness: an overly conservative robustness measure will be unable to find a satisfying path through the tunnel. {Interestingly, while the AGM robustness \cite{mehdipour2019arithmetic} is complete, we were unable to find a satisfying run using this approach.}

{To better understand the differences between \cite{pant2017smooth}, \cite{mehdipour2019arithmetic}, and our approach, each was tested for 20 runs or for 100 seconds, whichever generated more runs. The scenario was similar to Figure \ref{fig:either_or_ours_low_k}, except with 11 time steps, no control constraint, and differential drive dynamics. Initial positions were uniformly sampled from $[0,1]$ and the initial orientation was uniformly sampled from $[0,2\pi]$. $k_1=k_2=1$ was used for our approach. Each run used a random control sequence as an initial guess.}
\begin{table}
    {
    \caption{Comparison of smooth robustness approaches}
    \vspace{-1em}
    \begin{center}
        \begin{tabular}{ccccc}
         & $\bar{\rho}$ & $\sigma_{\rho}$ &$\bar{t}$  &  $\sigma_{t}$\\
         EF (ours)  & 0.174 &0.128& 11.876& 0.610 \\
         LSE \cite{pant2017smooth} & -0.120 & 0.336 &11.867 &0.336 \\
         AGM \cite{mehdipour2019arithmetic} & -5.467 &1.240  &0.463 &0.163
        \end{tabular}
    \end{center}
    \label{tab:mulitple_runs}
    }
\end{table}
{Results are shown in Table \ref{tab:mulitple_runs}. Our approach achieved a higher average traditional robustness ($\bar{\rho}$), with lower standard deviation ($\sigma_\rho$). Computation time was similar to LSE robustness, but slower than AGM.\footnote{The optimization times are not directly comparable since the AGM approach was implemented using MATLAB, and optimized with \texttt{fmincon}. The EF and LSE were implemented in Python and optimized with scipy's SQP method \cite{2020SciPy-NMeth}. All gradients were computed numerically.} One notable result is that AGM, while being a sound and complete measure, resulted in significantly lower robustness scores, and had a much higher variance. This is likely due to the existence of many local optima, together with the difficulty of optimizing over a non-smooth objective function.}

\begin{figure}
    \centerline{\includegraphics[width=0.6\linewidth]{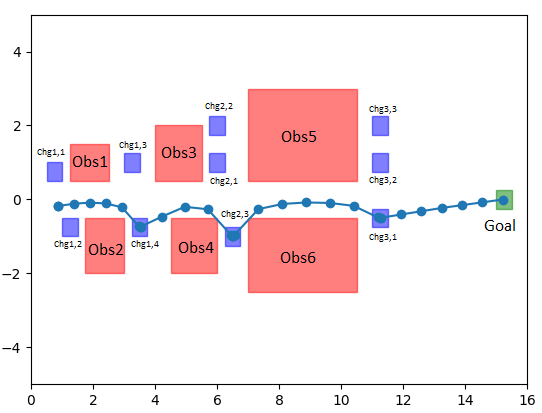}}
    \caption{{A robot, with integrator dynamics, must stop at three charging stations (in blue) on its way to the goal (in green) while avoiding the obstacles (in red). We tested scalability to the number of timesteps and the number of charging stations, results are shown in Figure \ref{Scalability}.}}
    \label{fig:complicated_ex}
\end{figure}

{To evaluate the scalability of our approach, we consider a scenario where a robot must avoid many obstacles and visit numerous charging stations within strict time limits. This scenario is shown in Figure \ref{fig:complicated_ex}. The specification can be formally written as,
\begin{equation}\label{spec:complicated}
    \begin{split}
        \varphi_{goal} &= \eventually_{[0,30]} (Goal)\\
        \varphi_{chg1} &= \eventually_{[1,10]}\big(\always_{[t_1',t_1'+3]}(Chg_{1,1} \lor \dots \lor Chg_{1,4})\big)\\
        \varphi_{chg2} &= \eventually_{[12,17]}\big(\always_{[t_2',t_2'+5]}(Chg_{2,1} \lor Chg_{2,2} \lor Chg_{2,3})\big)\\
        \varphi_{chg3} &= \eventually_{[20,25]}\big(\always_{[t_3',t_3'+3]}(Chg_{3,1} \lor Chg_{3,2} \lor Chg_{3,3})\big)\\
        \varphi_{obs} &=\lnot \always_{[0,30]} \big(Obs_1 \lor \dots \lor Obs_6 \big)\\
        \varphi_{cntrl} &= \always_{[0,30]} (-1 \leq u \leq 1).
    \end{split}
\end{equation}}

{We evaluate the scalability to specification complexity by increasing the number of charging stations ($P$). We evaluate the scalability to specification length by increasing the number of timesteps ($N$). Results from averaging over 100s of runs with random initial states are shown in Figure \ref{Scalability}. Our approach scales roughly linearly in both of these variables. In contrast, MIP-based methods are exponential in both $N$ and $P$. }
\begin{figure}
    \centering
    \begin{subfigure}{0.48\textwidth}
        \centerline{\includegraphics[width=0.8\linewidth]{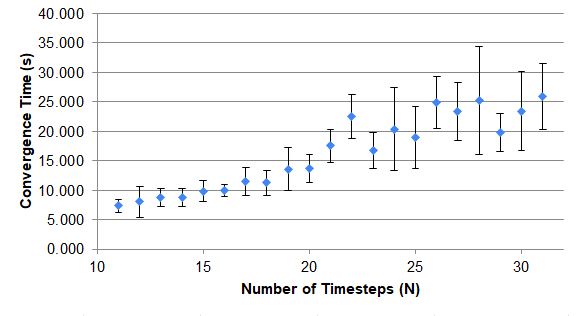}}
        \caption{{Graph relating the number of time steps in a specification to the convergence time with one charging station ($Chg_{1,1}$)}}
        \label{TimeVSN}
    \end{subfigure}
    \begin{subfigure}{0.48\textwidth}
        \centering
        \includegraphics[width=0.8\linewidth]{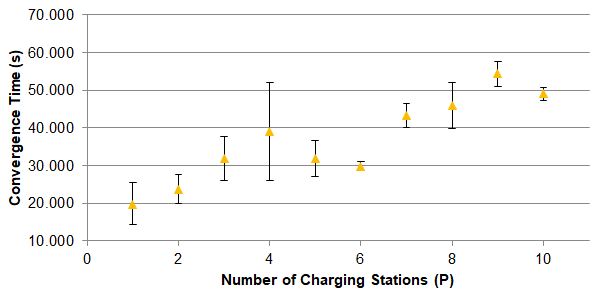}
        \caption{{Graph relating the number of charging stations to the convergence time with a fixed number of time steps, $N=29$.}}
        \label{TimeVSP}
    \end{subfigure}
    \caption{Empirical Scalability Results. As before the optimizer was scipy's SQP method \cite{2020SciPy-NMeth}. This method was initialized with a random initial guess.}
    \label{Scalability}
\end{figure}

\section{Conclusion}\label{sec:conclusion}
We presented a new smooth robustness measure for Signal Temporal Logic. Our proposed robustness measure is sound---a positive value is sufficient for satisfaction---and asymptotically complete---our smooth measure can be arbitrarily close to the true robustness. Our proposed robustness measure combines the benefits of existing smooth approximations of STL robustness, while enabling an explicit tradeoff between conservativeness and completeness. Future work will focus on developing efficient optimization techniques that exploit the structure of our smooth robustness to more effectively find satisfying trajectories. 

\bibliographystyle{./bibliography/IEEEtran}
\bibliography{references}

\end{document}